\newtheorem{thm}{Theorem}
\newtheorem{lem}[thm]{Lemma}
\newtheorem{prop}[thm]{Proposition}
\newtheorem{rem}{Remark}
\renewcommand{\arraystretch}{0.9}
\newcommand{\Bset}{\mathbb{B}}
\newcommand{\Rset}{\mathbb{R}}
\newcommand{\Uset}{\mathbb{U}}
\newcommand{\Xset}{\mathbb{X}}
\newcommand{\Yset}{\mathbb{Y}}
\newcommand{\Zset}{\mathbb{Z}}
\newcommand{\Ccal}{{\mathcal{C}}}
\newcommand{\Kcal}{{\mathcal{K}}}
\newcommand{\Ncal}{{\mathcal{N}}}
\newcommand{\Scal}{{\mathcal{S}}}
\newcommand{\Tcal}{{\mathcal{T}}}
\newcounter{l1}
\newcounter{l2}
\newcounter{l3}
\newcommand{\bdotlist}{\begin{list}{$\bullet$}{}}
\newcommand{\bboxlist}{\begin{list}{$\Box$}{}}
\newcommand{\bbboxlist}{\begin{list}{\raisebox{.005in}{{\tiny
$\blacksquare$ \ \ }}}{}}
\newcommand{\bdashlist}{\begin{list}{$-$}{} }
\newcommand{\blist}{\begin{list}{}{} }
\newcommand{\barablist}{\begin{list}{\arabic{l1}}{\usecounter{l1}}}
\newcommand{\balphlist}{\begin{list}{(\alph{l2})}{\usecounter{l2}}}
\newcommand{\bAlphlist}{\begin{list}{\Alph{l2}.}{\usecounter{l2}}}
\newcommand{\bdiamlist}{\begin{list}{$\diamond$}{}}
\newcommand{\bromalist}{\begin{list}{(\roman{l3})}{\usecounter{l3}}}
\newcommand{\indi}[1]{\mathds{1}  \hspace{-.03in} \left\{#1 \right\}}
\newcommand{\haus}[3][]{d_{\rm H}#1( #2, \, #3 #1)}
\begin{document}

\title{\LARGE \bf A Multi-Battery Model for the Aggregate Flexibility of Heterogeneous Electric Vehicles}

\author{Feras Al Taha, \, Tyrone Vincent, \,   Eilyan Bitar 
\thanks{To appear in the Proceedings of the 2023 American Control Conference (ACC).}
\thanks{This work was supported in part by the Natural Sciences and Engineering Research Council of Canada, in part by the Cornell Atkinson Center for Sustainability, in part by The Nature Conservancy, and in part by the Office of Naval Research via grant N00014-20-1-2492.}
\thanks{F. Al Taha is with the School of Electrical and Computer Engineering, Cornell University, Ithaca, NY, 14853, USA (e-mail:  foa6@cornell.edu). }%
\thanks{T. Vincent is with the Department of Electrical Engineering and Computer Science, Colorado School of Mines, Golden, CO 80401, USA (e-mail: tvincent@mines.edu).}%
\thanks{E. Bitar is with the School of Electrical and Computer Engineering, Cornell University, Ithaca, NY, 14853, USA (e-mail: eyb5@cornell.edu). }%
}

\maketitle 

\begin{abstract}
The increasing prevalence of  electric vehicles (EVs) in the transportation sector   will introduce a large number of highly flexible electric loads that EV aggregators can pool and control to provide energy and ancillary services to the wholesale electricity market.
To integrate large populations of EVs into  electricity market operations, aggregators must express the aggregate flexibility of the EVs under their control in the form of a small number of energy storage (battery) resources that accurately capture the supply/demand capabilities of the individual EVs as a collective. 
To this end, we propose a novel multi-battery flexibility model defined as a linear combination of  a small number of base sets (termed \textit{batteries}) that reflect the differing geometric shapes of the individual EV flexibility sets, and suggest a clustering approach to identify these base sets. 
We study the problem of computing a multi-battery flexibility set that has minimum Hausdorff distance to the aggregate flexibility set, subject to the constraint that the multi-battery flexibility set be a subset of the aggregate flexibility set.  We show how to conservatively approximate this problem with a tractable convex program, and illustrate the performance achievable by our method with several numerical experiments.
\end{abstract}

\section{Introduction} \label{sec:introduction}

The  growing adoption of electric vehicles (EVs) will sharply increase peak electricity demand, stressing both the electric power distribution and transmission systems if left unmanaged \cite{quiros2018electric,clement2009impact,muratori2018impact}. However, a growing number of field studies have shown  EV charging requirements in residential and workplace environments to be highly flexible in the sense that most EVs remain connected to their chargers far longer than the amount of time required to satisfy their energy needs \cite{alexeenko2021achieving,lee2019acn,smart2015plugged}. Given access to bidirectional charging infrastructure, the charging flexibility of many EVs can be centrally aggregated and dispatched to provide energy and ancillary services to the  wholesale electricity market, which has been made possible by FERC Order No. 2222 \cite{ferc2222}.

Since it would be impractical and computationally intractable for an independent system operator to individually dispatch each EV load seeking to participate in the wholesale electricity market, the flexibility of individual EVs  must be aggregated. In particular, under current market rules, an EV load aggregator participating in the wholesale electricity market must express the aggregate flexibility of the EVs under its command  as a single 
equivalent energy storage resource. These equivalent representations are typically defined in terms of time-varying upper and lower bounds on the aggregate charging/discharging power and aggregate state-of-charge of the collective \cite{nyiso,caiso}.
Characterizing the aggregate flexibility set associated with a collection of EVs requires the calculation of the Minkowski sum of the individual EVs' flexibility sets, which are typically expressed as convex polytopes in half-space representation. As the exact calculation of such Minkowski sums is  known to be computationally intractable in general \cite{gritzmann1993minkowski,trangbaek2012exact}, many  existing methods in the literature provide inner approximations of the aggregate flexibility set \cite{nayyar2013aggregate,hao2014aggregate,muller2015aggregation,zhao2016extracting,zhao2017geometric,nazir2018inner,kundu2018approximating}. Inner approximations are desirable as they are guaranteed to only contain feasible points.

\subsection{Related Literature and Contribution}
There is a  family of related methods in the literature that build inner approximations to the individual flexibility sets, utilizing specific geometries for the individual inner approximations which enable their efficient summation  to yield an inner approximation to the aggregate flexibility set.  
The inner approximations to the individual flexibility sets utilized by these methods are expressed as transformations of a common polyhedral set  (termed the \emph{base set}). 
For example, M{\"u}ller et al. \cite{muller2015aggregation} use a specific class of zonotopes (a family of centrally symmetric polytopes) to internally approximate each individual flexibility set. 
Zhao et al. \cite{zhao2017geometric} utilize homothetic transformations (dilation and translation) of a user-defined convex polytope. 
This approach was later extended by Al Taha et al. \cite{altaha2022efficient} to construct inner approximations via  general affine transformations of a given convex polytope. 
Nazir et al. \cite{nazir2018inner} provide a method to internally approximate the individual flexibility sets using unions of homothetic transformations of axis-aligned hyperrectangles.  
While all of the approximating sets provided by these methods have the property that their Minkowski sum can be easily calculated or internally approximated, their reliance on transformations of a \emph{single}  base set may  result in overly conservative approximations of the aggregate flexibility set when there is significant asymmetry and/or heterogeneity in the shapes of the individual flexibility sets. 

To limit this conservatism, we propose a novel  \emph{multi-battery flexibility model} that consists of a weighted sum of a small number of base sets that can better capture the differing geometric shapes of the individual flexibility sets, allowing a trade-off between model complexity and fidelity. 
Using this framework,  we study the problem of computing an optimal multi-battery flexibility set that has minimum Hausdorff distance to the aggregate flexibility set, subject to the constraint that the multi-battery flexibility set be contained within the aggregate flexibility set.
While this problem is shown to be computationally intractable, we provide a conservative approximation in the form of a convex program whose size scales polynomially with the number and dimension of the individual flexibility sets and base sets.
An advantage of using this model is that aggregate power profiles in the multi-battery set can be efficiently decomposed into individually feasible charging profiles for each EV using an affine mapping obtained as a byproduct of solving the proposed convex program.
We also suggest a clustering approach to identifying the base sets used in the multi-battery approximation to ensure that the different geometries of the individual sets are captured.
We conduct experiments to illustrate the improvement in performance that can be achieved in going from a single base set to multiple base sets using the multi-battery approximation method proposed in this paper. 

\subsection{Notation} 
We employ the following notational conventions throughout the paper.
Let $\Rset$ denote the set of real numbers. 
We denote the indicator function of set $\Scal$ by $\indi{x\in\Scal} = 1$ if $x \in \Scal$ and $\indi{x\in\Scal} = 0$ if $x \notin \Scal$.  
We denote the $n \times n$ identity matrix by $I_n$. 
Given a pair of matrices $A$ and $B$ of appropriate dimension, we let  ${\rm diag}(A, \, B)$  and $(A, \, B)$ denote the matrices formed by stacking $A$ and $B$ block-diagonally and vertically, respectively. Unless stated~otherwise, we let $\|\cdot\|$ denote an arbitrary norm and $\|\cdot\|_{*}$ its corresponding dual norm.
The Hausdorff distance between two sets $\Xset,\Yset \subseteq \Rset^n$ is defined  as  $\haus{\Xset}{\Yset} := \max \left \{ \sup_{x\in\Xset} \inf_{y \in \Yset} \|x-y\|, \sup_{y\in\Yset} \inf_{x\in\Xset} \|y-x\| \right \}$.
We refer to convex polytopes in half-space representation as \textit{H-polytopes} and to affine transformations of H-polytopes as \textit{AH-polytopes}.

\subsection{Paper Organization}
The remainder of the paper is organized as follows. 
In Section \ref{sec:model}, we introduce the multi-battery flexibility model and state the problem addressed in this paper. 
In Section~\ref{sec:cluster}, we provide a clustering approach to identifying the base sets employed in the multi-battery flexibility model. In Section~\ref{sec:approx}, we develop a scalable convex optimization model to compute multi-battery approximations. 
Numerical experiments are provided in Section \ref{sec:experiments} and 
Section \ref{sec:conclusion} concludes the paper.

\section{Problem Formulation}
\label{sec:model}
In this section, we present the EV flexibility set model and formulate the problem of computing optimal inner approximations of an aggregate flexibility set using multi-battery flexibility sets. 

\subsection{EV Charging Dynamics}

Consider  a population of $N$  EVs indexed by $i\in\Ncal:=\{1,\dots,N\}$.
Time is discretized into periods of equal length $\delta>0$ and indexed by $t\in\Tcal:=\{0,\dots,T-1\}$.
We let $u_i(t)$ denote the charging rate (kW) of EV  $i \in \Ncal$ at time $t \in \Tcal$, and $u_i := (u_i(0), \, \dots, \, u_i(T-1)) \in \Rset^T$ its charging profile. Given a charging profile $u_i$, the cumulative energy (kWh) supplied to each  EV  $i \in \Ncal$ is assumed to evolve according to the difference equation
\begin{align}
x_i(t+1) = x_i(t) + u_i(t) \delta , \quad t\in \Tcal,
\end{align}
where $x_i(0) = 0$.   We denote the resulting cumulative energy profile by $x_i = (x_i(1), \dots, x_i(T)) \in \Rset^{T}$, which satisfies the relationship 
\begin{align*}
 x_i = Lu_i,
\end{align*} 
where $ L \in \Rset^{T \times T}$ is a lower triangular matrix given by $L_{ij} := \delta$ for all $j \leq i$. 

\subsection{Individual and Aggregate Flexibility Sets}

The \textit{individual flexibility set} associated with each EV $i \in \Ncal$ is defined as the set of all admissible charging profiles, denoted by
\begin{align} \label{eq:indi_set}
\Uset_i : = \left\{ u \in \Rset^T \, | \, Hu \le h_i  \right\},
\end{align}
where $H := (L, -L, \, I_T, -I_T)$ and $h_i := (\overline{x}_i, -\underline{x}_i,  \overline{u}_i, -\underline{u}_i )$.  
The vectors  $\underline{u}_i , \overline{u}_i \in \Rset^T$ represent  minimum and maximum power limits on the charging profile, respectively. 
The vectors  $\underline{x}_i , \overline{x}_i \in \Rset^T$ represent  minimum and maximum energy limits on the cumulative energy profile, respectively. 
We assume that every individual flexibility set is nonempty.
Flexibility sets of the form \eqref{eq:indi_set} are compact, convex polytopes which are commonly referred to as \textit{generalized} or \textit{virtual battery models} in the literature \cite{hao2014aggregate,zhao2017geometric}. 

\begin{rem}[Alternative representation] \rm \label{rem:sparse}
An individual flexibility set can be equivalently represented in terms of the corresponding set of cumulative energy profiles, given by
\begin{align*}
    \Xset_i := L \Uset_i = \{ x \in\Rset^T \,|\, HL^{-1} x \le h_i \}.
\end{align*}
These alternative representations may be advantageous from a computational perspective, as the matrix $H L^{-1}$ is much sparser than the matrix $H$.
\end{rem}

We define the \textit{aggregate flexibility set} of a population of EVs as the Minkowski sum of the individual flexibility sets, denoted by
\begin{align} \label{eq:agg_set}
\Uset := \sum_{i \in \Ncal} \Uset_i. 
\end{align}
Computing the Minkowski sum of multiple H-polytopes is NP-hard in general \cite{tiwary2008hardness}. 
We note that it is straightforward to construct an outer approximation of the aggregate flexibility set by  simply adding the  right-hand side vectors of the individual flexibility sets, i.e.,  it can be shown that $    \Uset \subseteq \{ u \in \Rset^T \, | \, Hu \leq \sum_{i\in\Ncal} h_i  \}$. Since outer approximations may contain infeasible aggregate charging profiles, this motivates, in part, the need for efficient methods to construct accurate inner approximations.

\subsection{Multi-Battery Flexibility Set}

Acknowledging the difficulty in characterizing the aggregate flexibility set via exact Minkowski sum computation, our primary objective in this paper is to develop an efficient convex optimization based method to compute accurate inner approximations of the aggregate flexibility set.

Recognizing that the individual flexibility sets can differ greatly in terms of  shape, we seek to approximate their Minkowski sum by the weighted aggregation of a much smaller number of $ K \ll N$ virtual batteries (as defined in \eqref{eq:indi_set}) that effectively capture these different shapes.
We refer to these representative sets as the \textit{base sets} and denote them by 
\begin{align*}
\Bset_k := \{  u \in \Rset^T \, | \, Hu \le b_k\}, \quad k\in\Kcal,
\end{align*}
where $\Kcal:=\{1,\dots,K\}$ and $b_k \in\Rset^{4T}$ denotes the right-hand side  data of each base set.
In Section~\ref{sec:cluster},  we suggest
a clustering approach to constructing  the base sets  in a manner that seeks to capture the different geometries of the individual flexibility sets.

We refer to a  weighted combination of these sets as a \textit{multi-battery flexibility set}, defined as
\begin{align} \label{eq:MBM}
    \Bset := \mu + \sum_{k\in\Kcal} \sigma_k \Bset_k,
\end{align}
where $\mu\in\Rset^T$ denotes a translation vector, and $\sigma_k\in\Rset_+$ denotes the scaling factor applied to the $k$-th base set for $k=1,\dots,K$.
We note that homothetic transformations of the individual base sets $\Bset_k$ preserve their battery structure since the transformed base sets can be written as
\begin{align} \label{eq:homothet}
    \beta + \alpha \Bset_k = \left\{ u \in \Rset^T \, | \, Hu \le H \beta  + \alpha b_k  \right\},
\end{align}
for any translation  $\beta\in\Rset^T$ and  scaling $\alpha\in\Rset_+$.
Consequently, a multi-battery flexibility set can be interpreted as the aggregation of $K$ virtual batteries.

\subsection{Approximating the Aggregate Flexibility Set}
Given a collection of individual flexibility sets $\Uset_1,\dots,\Uset_N$ and base sets $\Bset_1,\dots, \Bset_K 
$, we seek to compute a multi-battery flexibility set $\Bset$ that has
minimum Hausdorff distance to the aggregate flexibility set,
subject to the requirement that it 
be contained within the aggregate flexibility set. This can be expressed according to the following polytope containment problem:
    \begin{align} \label{eq:problem}
     \text{minimize}  \ \, \haus{\Bset}{\Uset}  \ \, \text{subject to} \ \ \Bset = \mu + \sum_{k\in\Kcal} \sigma_k \Bset_k  \subseteq \Uset, 
    \end{align}
where  the optimization variables are  $\mu \in \Rset^T$ and $\sigma_k \in \Rset_+$ for $k=1,\dots,K$. 
We refer to this problem as the \textit{multi-battery approximation problem}. In Section \ref{sec:approx}, we discuss the hardness of this optimization problem, and suggest an approach to  conservatively approximate problem $\eqref{eq:problem}$ by a convex program.
In Section \ref{sec:cluster}, we first propose a clustering method to compute the  base sets which parameterize the multi-battery flexibility set.
 
\section{Clustering Flexibility Sets} \label{sec:cluster}

In this paper, we adopt a clustering-based approach to constructing the base sets  $\Bset_1, \dots, \Bset_K$ utilized in the multi-battery model \eqref{eq:MBM}. We partition the individual flexibility sets into $K$ clusters $\Ccal_1,\dots,\Ccal_K \subset \Ncal$, where flexibility sets belonging to the same cluster are more similar to each other than to flexibility sets belonging to other clusters. 
More formally, using the squared Hausdorff distance as a measure of similarity between flexibility sets, we seek to solve the following clustering problem 
\begin{align} \label{eq:poly_cluster}
    \textrm{minimize} \ \sum_{k\in\Kcal} \sum_{i \in \Ccal_k} \haus{\Uset_i}{\mathbb{B}_k}^2
\end{align}
with respect to the optimization variables $\Ccal_k$ and $\Bset_k$ for $k=1,\dots,K$. 

While natural to state, the polytope clustering problem~\eqref{eq:poly_cluster} is computationally intractable to solve, in part because it is NP-hard to compute the Hausdorff distance between two H-polytopes \cite{konig2014computational}. 
To address this issue, we utilize the Lipschitz continuity of H-polytopes (with respect to perturbations in the right-hand side data) to construct a tractable alternative for the clustering problem \eqref{eq:poly_cluster}, which can be solved using standard clustering algorithms from the literature.  The following result taken from \cite[Theorem 2.4]{li1993sharp} provides a sharp characterization of Lipschitz constants for H-polytopes.

\begin{lem} \label{lem:continuity} \rm 
Let $\Xset = \{ x \in \Rset^{n} \, | \, A x \leq b_x\}$ and $\Yset = \{ y \in \Rset^{n} \, | \, A y \leq b_y\}$  be nonempty and compact H-polytopes. If $\|\cdot\|_p$, $\|\cdot\|_q$ are two arbitrary norms and $d_H(\cdot, \cdot)$ is the Hausdorff metric induced by the norm $\|\cdot\|_p$, then 
\begin{align*}
    \haus{\Xset}{\Yset} \le L(A) \| b_x - b_y \|_q,
\end{align*}
where 
\begin{align}
   L(A) := \sup \left\{ \|x\|_{q*}  \left| \ \text{\begin{minipage}{13em} $\|A^\top x\|_{p*} = 1$; $x\ge 0$; the rows of $A$ corresponding to nonzero components of $x$ are linearly independent.  \end{minipage}} \right. \right\}.  
   \end{align}
\end{lem}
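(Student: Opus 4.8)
The plan is to prove the one-sided estimate $\sup_{x\in\Xset}\inf_{y\in\Yset}\|x-y\|_p \le L(A)\,\|b_x-b_y\|_q$ and then to recover the full Hausdorff bound by symmetry. Fix $x\in\Xset$. Substituting $z=x-y$, the distance from $x$ to $\Yset$ equals the optimal value of the projection program $v(c):=\min\{\|z\|_p : Az\ge c\}$ evaluated at $c:=Ax-b_y$. Since $\Yset$ is nonempty this program is feasible, and as $\|z\|_p$ is coercive over the closed feasible set its minimum is attained. The first step is to dualize: using that $\inf_z(\|z\|_p-w^\top z)$ equals $0$ when $\|w\|_{p*}\le 1$ and $-\infty$ otherwise, and noting that no constraint qualification is needed because the constraints are affine, convex duality gives
\begin{align*}
 v(c)=\max\left\{\lambda^\top c \ : \ \lambda\ge 0,\ \|A^\top\lambda\|_{p*}\le 1\right\}.
\end{align*}
The KKT conditions at the primal minimizer (necessary here, again since the constraints are affine) show that this maximum is attained.

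The heart of the argument, and the step I expect to be the main obstacle, is to show that the dual maximum is attained at a point $\lambda^\star$ whose active rows are linearly independent, which is exactly the restriction appearing in the definition of $L(A)$. I would obtain this by an extreme-point reduction. The set of dual optima is a face of the feasible region $F:=\{\lambda\ge 0 : \|A^\top\lambda\|_{p*}\le 1\}$; since $F$ is contained in the nonnegative orthant it contains no line, hence it and each of its faces possess extreme points, and an extreme point of a face of $F$ is an extreme point of $F$. Let $\lambda^\star$ be such an extreme optimizer and let $S=\{i:\lambda^\star_i>0\}$ be its support. Were the rows $\{A_i : i\in S\}$ linearly dependent, there would exist a nonzero $w$ supported on $S$ with $A^\top w=0$; then $\lambda^\star\pm\epsilon w$ would remain feasible for small $\epsilon>0$ (the vector $A^\top\lambda^\star$ is unchanged and nonnegativity is preserved on $S$), contradicting extremality. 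Hence $\{A_i : i\in S\}$ is linearly independent. Moreover, whenever $v(c)>0$ we must have $\|A^\top\lambda^\star\|_{p*}=1$, since otherwise scaling $\lambda^\star$ up would strictly increase the objective. Thus $\lambda^\star$ is feasible for the program defining $L(A)$, giving $\|\lambda^\star\|_{q*}\le L(A)$.

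With such a $\lambda^\star$ the remaining steps are routine. Writing $c_+$ for the componentwise positive part of $c$, the fact that $\lambda^\star\ge 0$ and $c\le c_+$ yields $v(c)=\lambda^{\star\top}c\le\lambda^{\star\top}c_+$, and H\"older's inequality for the dual pair $(\|\cdot\|_q,\|\cdot\|_{q*})$ gives $\lambda^{\star\top}c_+\le\|\lambda^\star\|_{q*}\,\|c_+\|_q\le L(A)\,\|c_+\|_q$. Finally, the hypothesis $x\in\Xset$ means $Ax\le b_x$, so $c=Ax-b_y\le b_x-b_y$ and therefore $0\le c_+\le(b_x-b_y)_+$ componentwise; monotonicity of $\|\cdot\|_q$ on the nonnegative orthant then gives $\|c_+\|_q\le\|b_x-b_y\|_q$. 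Combining these bounds yields $\inf_{y\in\Yset}\|x-y\|_p\le L(A)\,\|b_x-b_y\|_q$ uniformly over $x\in\Xset$. Exchanging the roles of $\Xset$ and $\Yset$, under which both $L(A)$ and $\|b_x-b_y\|_q$ are invariant, bounds the reverse one-sided distance identically, and taking the maximum of the two gives $\haus{\Xset}{\Yset}\le L(A)\,\|b_x-b_y\|_q$.
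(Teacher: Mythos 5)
The paper offers no proof of this lemma: it is imported verbatim from \cite[Theorem 2.4]{li1993sharp}, so there is no in-paper argument to compare against and your proposal must stand on its own. Its architecture is the standard one for error bounds of this type and is essentially sound: reducing $\inf_{y\in\Yset}\|x-y\|_p$ to $v(c)=\min\{\|z\|_p : Az\ge c\}$ with $c=Ax-b_y$ is correct; the dual $\max\{\lambda^\top c : \lambda\ge 0,\ \|A^\top\lambda\|_{p*}\le 1\}$ is correctly derived; dual attainment does follow from the affine constraints (a Kuhn--Tucker vector exists whenever the primal value is finite); the optimal set is a closed, line-free face of the dual feasible region and therefore has an extreme point; and your perturbation argument correctly shows that the support rows of an extreme point are linearly independent, while the rescaling argument forces $\|A^\top\lambda^\star\|_{p*}=1$ whenever $v(c)>0$ (the case $v(c)=0$ being trivial). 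Up to that point everything checks out.

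The step that fails is the very last one. You pass from $0\le c_+\le (b_x-b_y)_+$ to $\|c_+\|_q\le\|b_x-b_y\|_q$ by ``monotonicity of $\|\cdot\|_q$ on the nonnegative orthant,'' but the lemma allows $\|\cdot\|_q$ to be an \emph{arbitrary} norm, and arbitrary norms are not monotone: for $\|(a,b)\|:=|a-b|+|b|$ one has $(0,1)\le(1,1)$ componentwise with both vectors nonnegative, yet $\|(0,1)\|=2>1=\|(1,1)\|$. Monotonicity (equivalently, absoluteness) is genuinely extra structure, so as written the argument only covers monotone norms such as the $\ell_p$ family. The repair is a one-line rearrangement that makes the detour through positive parts unnecessary: since $\lambda^\star\ge 0$ and $Ax\le b_x$,
\begin{align*}
v(c)=\lambda^{\star\top}(Ax-b_y)=\lambda^{\star\top}(Ax-b_x)+\lambda^{\star\top}(b_x-b_y)\le \lambda^{\star\top}(b_x-b_y)\le \|\lambda^\star\|_{q*}\,\|b_x-b_y\|_q\le L(A)\,\|b_x-b_y\|_q,
\end{align*}
which uses only the sign of $\lambda^\star$ and the definition of the dual norm. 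With that substitution the proof is complete, and the concluding symmetry argument is fine.
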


Using Lemma \ref{lem:continuity}, we can upper bound the  clustering loss function in \eqref{eq:poly_cluster} by 
\begin{align*}
\sum_{k\in\Kcal} \sum_{i \in \Ccal_k} \haus{\Uset_i}{\Bset_k}^2 \leq L(H)^2 \sum_{k\in\Kcal} \sum_{i \in \Ccal_k} \|h_i - b_k\|_2^2,
\end{align*}
where the upper bound is specified in terms of the Euclidean norm.              
 Utilizing this upper bound as a surrogate for the original clustering loss function, we arrive at the following $K$-means clustering problem in the right-hand side data of the individual flexibility sets:
 partition the right-hand side data $\{h_i\}_{i=1}^N$ into $K$ clusters $\Ccal_1, \dots, \Ccal_K$
to minimize the within-cluster variation
\begin{align} \label{eq:rhs_clustering}
    \textrm{minimize} \ \sum_{k\in\Kcal} \sum_{i \in \Ccal_k} \| h_i - b_k \|_2^2,
\end{align}
where $b_k$ is the mean of the elements belonging to the $k$-th cluster, which is given by 
\begin{align} \label{eq:cluster_RHS}
    b_k := \frac{1}{|\Ccal_k|} \sum_{i \in \Ccal_k} h_i
\end{align}
for $k=1, \dots, K$.
Although the $K$-means clustering problem  is known to be NP-hard \cite{dasgupta2008hardness}, there are a number of algorithms, such as Lloyd's algorithm \cite{lloyd1982least}, that can be utilized to efficiently compute suboptimal solutions.

We  note that, given any clustering of the individual flexibility sets, the  base sets $\{\Bset_k\}_{k \in \Kcal}$ defined by right-hand side vectors given by the corresponding centroids in \eqref{eq:cluster_RHS} are guaranteed to be nonempty. This follows from the fact (which is straightforward to verify) that 
\begin{align} \label{eq:outer_approx}
    \frac{1}{|\Ccal_k|} \sum_{i \in \Ccal_k} \Uset_i \subseteq \Bset_k 
\end{align}
for all $k \in \Kcal$. In other words, for each cluster $\Ccal_k$, the corresponding base set $\Bset_k$ is guaranteed to be an outer approximation of the Minkowski average of the individual flexibility sets belonging to that cluster. Thus,  nonemptiness of the individual flexibility sets  guarantees nonemptiness of the  base sets for any clustering. 

We also note that for $K=1$, the base set obtained by this method corresponds to the base set originally proposed in \cite{zhao2017geometric} and later considered by \cite{altaha2022efficient} for single-battery approximations of aggregate flexibility sets.

\section{Multi-battery Approximation} \label{sec:approx}

The multi-battery approximation problem \eqref{eq:problem} is computationally intractable to solve for two fundamental reasons.
First, calculating the Hausdorff distance between Minkowski sums of H-polytopes is NP-hard in general \cite{konig2014computational}. Second, verifying  the containment of a Minkowski sum of H-polytopes in  the Minkowski sum of another set of H-polytopes is an NP-complete problem \cite{tiwary2008hardness}.
In Sections \ref{sec:approx_cont} and \ref{sec:approx_haus}, we address these computational issues by conservatively approximating both the containment condition and objective function  to yield a convex programming inner approximation to the multi-battery approximation problem \eqref{eq:problem}. 
In Section \ref{sec:disagg}, we show how to efficiently disaggregate any element in the multi-battery flexibility set into a collection of elements belonging to the individual flexibility sets.

\subsection{Approximating the Containment Condition: $\Bset\subseteq\Uset$} \label{sec:approx_cont}

We begin by providing a sufficient condition for the containment of a multi-battery set $\Bset$  within the aggregate flexibility set $\Uset$. In short, Proposition \ref{prop:ind_cont} 
replaces the original set containment constraint $ \mu +  \sum_{k\in\Kcal}  \sigma_k \Bset_k \subseteq \sum_{i \in \Ncal} \Uset_i$ with a more conservative collection of individual flexibility set containment constraints given by $\gamma_i + \sum_{k\in\Kcal} \Gamma_{k,i} \Bset_k \subseteq \Uset_i$  for $i=1, \dots, N$. 
The key simplifying element in Proposition \ref{prop:ind_cont} is that each of these individual set containment conditions only involves a single H-polytope on the right-hand side, rather than a Minkowski sum of H-polytopes.

\begin{prop} \label{prop:ind_cont} \rm  It holds that $\mu + \sum_{k\in\Kcal} \sigma_k \Bset_k \subseteq \Uset$ if there exist $\gamma_i\in\Rset^T$ and $\Gamma_{k,i}\in\Rset^{T\times T}$ for $i=1,\dots,N$ and $k=1,\dots,K$ such that
\begin{align}
    \label{eq:B_inclusion_1}&\mu = \sum_{i\in\Ncal} \gamma_i,\\
    \label{eq:B_inclusion_2}&\sigma_k I_T = \sum_{i\in\Ncal} \Gamma_{k,i}, \quad  \forall \, k \in \Kcal,\\
    \label{eq:B_inclusion_3}&\gamma_i + \sum_{k\in\Kcal} \Gamma_{k,i} \Bset_k \subseteq \Uset_i, \quad \forall \, i \in \Ncal.
\end{align}
\end{prop}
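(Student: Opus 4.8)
The plan is to use the candidate data $\gamma_i,\Gamma_{k,i}$ to \emph{decompose} the multi-battery set into one piece per EV, each landing inside the corresponding individual flexibility set. For each $i\in\Ncal$, define the per-agent set
\[
S_i := \gamma_i + \sum_{k\in\Kcal} \Gamma_{k,i}\Bset_k.
\]
Condition \eqref{eq:B_inclusion_3} is precisely the statement $S_i\subseteq\Uset_i$, so by monotonicity of Minkowski summation (if $A_i\subseteq A_i'$ for all $i$ then $\sum_i A_i\subseteq\sum_i A_i'$) we obtain $\sum_{i\in\Ncal} S_i\subseteq\sum_{i\in\Ncal}\Uset_i=\Uset$. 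It therefore suffices to establish the inclusion $\Bset\subseteq\sum_{i\in\Ncal} S_i$, after which the chain $\Bset\subseteq\sum_i S_i\subseteq\Uset$ yields the claim.

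For the remaining inclusion I would expand $\sum_{i\in\Ncal} S_i$ using commutativity and associativity of the Minkowski sum,
\[
\sum_{i\in\Ncal} S_i = \Big(\sum_{i\in\Ncal}\gamma_i\Big) + \sum_{k\in\Kcal}\Big(\sum_{i\in\Ncal}\Gamma_{k,i}\Bset_k\Big),
\]
where the translation term collapses to $\mu$ by condition \eqref{eq:B_inclusion_1}. For each fixed $k$, the key step is the elementary set inclusion $\big(\sum_{i\in\Ncal}\Gamma_{k,i}\big)\Bset_k\subseteq\sum_{i\in\Ncal}\Gamma_{k,i}\Bset_k$: for any single point $x\in\Bset_k$ one has $\big(\sum_i\Gamma_{k,i}\big)x=\sum_i\Gamma_{k,i}x$, so selecting the same $x$ in every summand exhibits $\big(\sum_i\Gamma_{k,i}\big)x$ as an element of the Minkowski sum. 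Invoking condition \eqref{eq:B_inclusion_2}, namely $\sum_{i\in\Ncal}\Gamma_{k,i}=\sigma_k I_T$, this gives $\sigma_k\Bset_k\subseteq\sum_{i\in\Ncal}\Gamma_{k,i}\Bset_k$. Summing over $k\in\Kcal$ and translating by $\mu$ then yields $\Bset=\mu+\sum_{k\in\Kcal}\sigma_k\Bset_k\subseteq\sum_{i\in\Ncal} S_i$, as needed.

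The step I expect to be the crux is recognizing that the linear-image inclusion $(A+B)\Bset_k\subseteq A\Bset_k+B\Bset_k$ points in the required direction and needs no convexity — it follows purely from choosing a common point in each summand. It is worth emphasizing that this containment is in general \emph{strict}, so the decomposition conditions \eqref{eq:B_inclusion_1}--\eqref{eq:B_inclusion_3} are sufficient but not necessary for $\Bset\subseteq\Uset$. This one-sided slack is exactly the source of conservatism that renders the resulting convex reformulation an inner approximation rather than an exact reformulation of problem \eqref{eq:problem}.
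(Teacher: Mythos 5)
Your proof is correct and follows essentially the same route as the paper's: both arguments substitute $\mu=\sum_i\gamma_i$ and $\sigma_k I_T=\sum_i\Gamma_{k,i}$, use the inclusion $\bigl(\sum_i\Gamma_{k,i}\bigr)\Bset_k\subseteq\sum_i\Gamma_{k,i}\Bset_k$ (choosing the same point in every summand) to land inside $\sum_i\bigl(\gamma_i+\sum_k\Gamma_{k,i}\Bset_k\bigr)$, and then apply condition \eqref{eq:B_inclusion_3} with monotonicity of the Minkowski sum. Your explicit identification of where the one-sided slack enters is a nice addition but does not change the argument.
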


\begin{proof}
The desired result follows from the following string of inclusions:
\begin{align*}
\mu + \sum_{k\in\Kcal} \sigma_k \Bset_k &\overset{(a)}{=} \Big( \sum_{i\in\Ncal} \gamma_i  \Big) + \sum_{k\in\Kcal} \Big( \sum_{i\in\Ncal} \Gamma_{k,i} \Big)\Bset_k\\
&\overset{(b)}{\subseteq} \sum_{i\in\Ncal} \Big ( \gamma_i + \sum_{k\in\Kcal} \Gamma_{k,i} \Bset_k  \Big) \\
&\overset{(c)}{\subseteq} \sum_{i\in\Ncal} \Uset_i\\
& = \Uset.
\end{align*}
Equality $(a)$ follows from a direct substitution according to the conditions \eqref{eq:B_inclusion_1}-\eqref{eq:B_inclusion_2}. Inclusion $(b)$ follows from the fact that any element  belonging to the set $ ( \sum_{i\in\Ncal} \gamma_i  ) + \sum_{k\in\Kcal} ( \sum_{i\in\Ncal} \Gamma_{k,i} ) \Bset_k$ can be expressed as the sum of elements belonging to the sets $\gamma_i + \sum_{k \in \Kcal} \Gamma_{k,i} \Bset_k$ for $i = 1, \dots, N$.  Inclusion $(c)$ follows from the containment conditions  in \eqref{eq:B_inclusion_3}.
\end{proof}

We note that, unlike the approximating set $\Bset$, the internal approximations  to the individual flexibility sets  in Proposition \ref{prop:ind_cont}  are not required to be multi-batteries. 
They can be expressed as general affine transformations of the given base sets (cf. condition \eqref{eq:B_inclusion_3}), requiring only that their aggregation be a multi-battery flexibility set (cf. conditions \eqref{eq:B_inclusion_1}-\eqref{eq:B_inclusion_2}). This allows for  a broader family of approximations to the individual flexibility sets.

The set containment conditions \eqref{eq:B_inclusion_3} specified in Proposition \ref{prop:ind_cont}  can be  equivalently reformulated as a set of linear feasibility conditions using a well-known result from the literature \cite{sadraddini2019linear,kellner2015containment,rakovic2007optimized,altaha2022efficient}, which provides necessary and sufficient conditions for the containment of an AH-polytope within an H-polytope. 
\begin{lem} \label{lem:ah-polytope containment} \rm Let $\Xset = \{ x \in \Rset^{n_x} \, | \, H_x x \leq h_x\}$ and $\Yset = \{ y \in \Rset^{n_y} \, | \, H_y y \leq h_y\}$,   where $H_x \in \Rset^{m_x \times n_x}$, $H_y \in \Rset^{m_y \times n_y}$, and $\Xset$ is assumed to be nonempty. Given a vector $\gamma \in \Rset^{n_y} $ and matrix $\Gamma \in \Rset^{n_y \times n_x}$, it holds that $\gamma + \Gamma \Xset \subseteq \Yset$  if and only if there exists a matrix $\Lambda \in \Rset^{m_y \times m_x}$ such that
\begin{align}
\label{eq:inner cond 1} & \Lambda  \geq 0,  \\
\label{eq:inner cond 2}  &\Lambda  H_x = H_y\Gamma ,    \\
 \label{eq:inner cond 3}   &\Lambda h_x  \leq h_y -  H_y \gamma.
\end{align}
\end{lem}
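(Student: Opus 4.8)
The plan is to prove the two directions separately, with sufficiency being a one-line substitution and necessity resting on linear programming duality applied one row of $H_y$ at a time. The nonnegativity constraint \eqref{eq:inner cond 1} on $\Lambda$ is exactly what makes the substitution in the sufficiency direction preserve inequalities, and it is the sign constraint on the dual multipliers in the necessity direction.

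For sufficiency, I would assume the existence of some $\Lambda \geq 0$ satisfying \eqref{eq:inner cond 2} and \eqref{eq:inner cond 3}, and take an arbitrary point $y = \gamma + \Gamma x$ with $x \in \Xset$. Since $H_x x \leq h_x$ and $\Lambda \geq 0$ entrywise, left-multiplication by $\Lambda$ preserves the inequality, giving $\Lambda H_x x \leq \Lambda h_x$. Substituting \eqref{eq:inner cond 2} and then \eqref{eq:inner cond 3} yields $H_y y = H_y \gamma + H_y \Gamma x = H_y \gamma + \Lambda H_x x \leq H_y \gamma + \Lambda h_x \leq h_y$, so $y \in \Yset$. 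As $x \in \Xset$ was arbitrary, this establishes $\gamma + \Gamma \Xset \subseteq \Yset$.

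For necessity, I would assume the containment holds and construct $\Lambda$ row by row. Writing $(H_y)_j$ and $(h_y)_j$ for the $j$-th row and entry of $H_y$ and $h_y$, the containment implies $(H_y)_j \Gamma x \leq (h_y)_j - (H_y)_j \gamma$ for every $x \in \Xset$. Consequently the linear program $\max \{ (H_y)_j \Gamma x : H_x x \leq h_x \}$ is feasible, because $\Xset$ is nonempty, and bounded above by $(h_y)_j - (H_y)_j \gamma$. By strong LP duality its dual $\min\{ \lambda^\top h_x : \lambda^\top H_x = (H_y)_j \Gamma, \ \lambda \geq 0 \}$ is then feasible and attains the same finite optimum, which is at most $(h_y)_j - (H_y)_j \gamma$. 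Taking an optimal dual vector $\lambda_j$ to be the $j$-th row of $\Lambda$ and assembling these rows over all $j$ produces a $\Lambda$ satisfying \eqref{eq:inner cond 1}, \eqref{eq:inner cond 2}, and \eqref{eq:inner cond 3}.

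The main obstacle I anticipate is making the necessity argument fully rigorous at the level of the per-row LP: one must verify both feasibility and boundedness so that strong duality applies and the dual optimum is attained. Feasibility is immediate from nonemptiness of $\Xset$, while boundedness is precisely the content of the containment hypothesis. The one subtlety worth flagging is that no compactness of $\Xset$ is assumed, so the finiteness of the objective cannot be taken for granted and must instead be extracted from the containment relation itself. Since the statement is quoted from \cite{sadraddini2019linear}, I would cite that reference for the complete details rather than reproduce the duality argument in full.
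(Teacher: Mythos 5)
Your proof is correct: sufficiency follows by left-multiplying $H_x x \leq h_x$ by the entrywise-nonnegative $\Lambda$ and substituting \eqref{eq:inner cond 2}--\eqref{eq:inner cond 3}, and necessity follows by applying strong LP duality to the row-wise support-function LPs, where feasibility comes from $\Xset \neq \emptyset$ and boundedness comes from the containment hypothesis, exactly as you flag. The paper itself does not reproduce a proof---it states that the lemma follows from standard duality results and is a variant of Farkas' Lemma, deferring details to \cite[Lemma 1]{altaha2022efficient}---so your argument is precisely the one the paper alludes to.
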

Lemma \ref{lem:ah-polytope containment} follows from standard duality results in convex analysis,
and can be interpreted as a variant of Farkas’ Lemma. We refer the reader to \cite[Lemma 1]{altaha2022efficient} for a concise proof of Lemma \ref{lem:ah-polytope containment}. 

Utilizing Proposition \ref{prop:ind_cont} and Lemma \ref{lem:ah-polytope containment}, we  provide a set of linear feasibility conditions that are sufficient for the multi-battery containment condition $\Bset = \mu + \sum_{k \in \Kcal} \sigma_k \Bset_k \subseteq \Uset$.

\begin{thm} \rm \label{thm:Bi_in_Ui} 
It holds that $\mu + \sum_{k\in\Kcal} \sigma_k \Bset_k \subseteq \Uset$ if there exist $\Lambda_{k,i} \in \Rset^{4T\times4T}$, $\gamma_i\in\Rset^T$ and $\Gamma_{k,i}\in\Rset^{T\times T}$ for $i=1,\dots,N$ and $k=1,\dots,K$ such that
\begin{align}
\label{eq:sum_AH_in_H_a}&\mu = \sum_{i\in\Ncal} \gamma_i,\\
\label{eq:sum_AH_in_H_b}&\sigma_k I_T = \sum_{i\in\Ncal} \Gamma_{k,i}, \quad \forall \, k \in \Kcal,\\
\label{eq:sum_AH_in_H_1}& \Lambda_{k,i}  \geq 0, \quad \forall \, k \in \Kcal,   \, i \in \Ncal, \\
\label{eq:sum_AH_in_H_2}&\Lambda_{k,i}  H = H \Gamma_{k,i}, \quad   \forall \, k \in \Kcal, \, i \in \Ncal,\\
\label{eq:sum_AH_in_H_3}&\sum_{k\in\Kcal} \Lambda_{k,i} b_k  \leq h_i -  H \gamma_i, \quad \forall \, i \in \Ncal.
\end{align}
\end{thm}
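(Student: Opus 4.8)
The plan is to obtain the claim as a direct consequence of Proposition \ref{prop:ind_cont} combined with Lemma \ref{lem:ah-polytope containment}. The first observation is that conditions \eqref{eq:sum_AH_in_H_a} and \eqref{eq:sum_AH_in_H_b} are verbatim copies of hypotheses \eqref{eq:B_inclusion_1} and \eqref{eq:B_inclusion_2}. Hence, by Proposition \ref{prop:ind_cont}, it suffices to show that the remaining conditions \eqref{eq:sum_AH_in_H_1}--\eqref{eq:sum_AH_in_H_3} imply the per-vehicle containment \eqref{eq:B_inclusion_3}, namely $\gamma_i + \sum_{k\in\Kcal}\Gamma_{k,i}\Bset_k \subseteq \Uset_i$, for each fixed $i\in\Ncal$.

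The main obstacle is that the left-hand side of \eqref{eq:B_inclusion_3} is a Minkowski sum of affine images of the base sets, whereas Lemma \ref{lem:ah-polytope containment} certifies containment of a \emph{single} AH-polytope. To bridge this gap, I would lift the sum into a product space. Define the product H-polytope
\[
\mathbb{P} := \Bset_1 \times \cdots \times \Bset_K = \left\{ v=(v_1,\dots,v_K) \mid \diag(H,\dots,H)\, v \le (b_1,\dots,b_K) \right\} \subseteq \Rset^{KT},
\]
together with the concatenated matrix $\widetilde{\Gamma}_i := \bmat{\Gamma_{1,i} & \cdots & \Gamma_{K,i}} \in \Rset^{T\times KT}$. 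A short check shows $\sum_{k\in\Kcal}\Gamma_{k,i}\Bset_k = \widetilde{\Gamma}_i\,\mathbb{P}$, so that \eqref{eq:B_inclusion_3} is exactly the single AH-polytope containment $\gamma_i + \widetilde{\Gamma}_i\,\mathbb{P} \subseteq \Uset_i$. Since every base set is nonempty (cf.\ \eqref{eq:outer_approx}), the product $\mathbb{P}$ is nonempty, so Lemma \ref{lem:ah-polytope containment} applies with $H_x = \diag(H,\dots,H)$, $h_x = (b_1,\dots,b_K)$, $H_y = H$, and $h_y = h_i$.

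It then remains to match the block structure. Writing the certificate matrix guaranteed by Lemma \ref{lem:ah-polytope containment} in the partitioned form $\Lambda = \bmat{\Lambda_{1,i} & \cdots & \Lambda_{K,i}}$ with blocks $\Lambda_{k,i}\in\Rset^{4T\times 4T}$, the three Farkas-type conditions decouple exactly into the stated hypotheses: nonnegativity $\Lambda \ge 0$ is \eqref{eq:sum_AH_in_H_1}; the equality $\Lambda\,\diag(H,\dots,H) = H\widetilde{\Gamma}_i$ reads blockwise as $\Lambda_{k,i} H = H\Gamma_{k,i}$, which is \eqref{eq:sum_AH_in_H_2}; and $\Lambda\,(b_1,\dots,b_K) \le h_i - H\gamma_i$ reads as $\sum_{k\in\Kcal}\Lambda_{k,i} b_k \le h_i - H\gamma_i$, which is \eqref{eq:sum_AH_in_H_3}. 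Thus the given $\Lambda_{k,i}$ furnish a valid containment certificate for each $i$, establishing \eqref{eq:B_inclusion_3}, and the result follows from Proposition \ref{prop:ind_cont}. The only genuinely substantive step is the product-space lifting that converts the Minkowski sum into a single affine image of an H-polytope; everything else is bookkeeping on block-partitioned matrices.
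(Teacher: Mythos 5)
Your proof is correct and follows essentially the same route as the paper: the product-space lifting you describe (your $\mathbb{P}$ and $\widetilde{\Gamma}_i$) is precisely the paper's rewriting of $\gamma_i + \sum_{k\in\Kcal}\Gamma_{k,i}\Bset_k$ as the AH-polytope $\gamma_i + \bar{\Gamma}_i\bar{\Bset}$ with $\bar{\Bset} = \{\bar{u}\in\Rset^{KT} \mid \diag(H,\dots,H)\bar{u}\le(b_1,\dots,b_K)\}$, followed by an application of Lemma \ref{lem:ah-polytope containment} and Proposition \ref{prop:ind_cont}. Your explicit block-partitioning of the certificate matrix $\Lambda$ is only spelled out more fully than in the paper, but is the same bookkeeping.
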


\begin{proof}
From Proposition \ref{prop:ind_cont}, it suffices to show that $\gamma_i + \sum_{k\in\Kcal} \Gamma_{k,i} \Bset_k \subseteq \Uset_i$ for $i=1,\dots,N$ to prove the desired result.
For $i\in\Ncal$, the Minkowski sum under consideration can be rewritten as an AH-polytope given by
\begin{align*}
    \gamma_i + \sum_{k\in\Kcal} \Gamma_{k,i} \Bset_k = \gamma_i + \bar{\Gamma}_i \bar{\Bset},
\end{align*}
where $\bar{\Gamma}_i := \begin{bmatrix}\Gamma_{1,i} &\hdots &\Gamma_{K,i}\end{bmatrix}$ and  $\bar{\Bset} := \{\bar{u} \in \Rset^{KT} \, | \, {\rm diag}(H, \dots, H) \bar{u} \le (b_1,\dots,b_K) \}$. 
From Lemma \ref{lem:ah-polytope containment}, it follows  that conditions \eqref{eq:sum_AH_in_H_1}-\eqref{eq:sum_AH_in_H_3}  are necessary and sufficient for the  containment  $\gamma_i + \bar{\Gamma}_i \bar{\Bset} \subseteq \Uset_i$ for $i=1,\dots,N$. 
\end{proof}

\subsection{Approximating the Hausdorff Distance: $\haus{\Bset}{\Uset}$} \label{sec:approx_haus}

As noted earlier, computing the Hausdorff distance between the multi-battery set $\Bset$ and the aggregate flexibility set $\Uset$ exactly is an NP-hard problem.
To address this, we seek to construct an upper bound on this distance that can be utilized as a surrogate for the objective function of the multi-battery approximation problem \eqref{eq:problem}.
We derive an upper bound by utilizing the Lipschitz continuity of H-polytopes with respect to their right-hand side data, and exploiting the fact that each base set is a superset of the Minkowski average of the individual flexibility sets that belong to its corresponding cluster, i.e., $\Bset_k \supseteq (1/|\Ccal_k|) \sum_{i \in \Ccal_k} \Uset_i$.

\begin{thm} \rm \label{thm:obj_fnct} 
Given a collection of clusters $\{\Ccal_k\}_{k \in \Kcal}$, base sets $\{\Bset_k\}_{k \in \Kcal}$, and a multi-battery $\Bset = \mu + \sum_{k \in \Kcal} \sigma_k \Bset_k$, it holds that
\begin{align} \label{eq:upper_bound}
    \haus{\Bset}{\Uset} \le L(H) \sum_{k\in\Kcal} \| H \mu_k + ( \sigma_k - |\Ccal_k|  )   b_k \|,
\end{align} 
where $\mu_1,\dots,\mu_K\in\Rset^T$  is any collection of   vectors that satisfy $\sum_{k\in\Kcal}\mu_k = \mu$.
\end{thm}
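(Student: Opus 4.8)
The plan is to interpose an auxiliary multi-battery $\Uset^{\mathrm{out}} := \sum_{k\in\Kcal} |\Ccal_k|\,\Bset_k$ between $\Bset$ and $\Uset$, reduce $\haus{\Bset}{\Uset}$ to the distance $\haus{\Bset}{\Uset^{\mathrm{out}}}$, split that distance cluster-by-cluster using the subadditivity of the Hausdorff distance under Minkowski summation, and finally bound each per-cluster term by a single application of Lemma~\ref{lem:continuity}. The decomposition hinges on writing, via $\sum_{k\in\Kcal}\mu_k=\mu$ and the homothety identity \eqref{eq:homothet}, both $\Bset=\sum_{k\in\Kcal}(\mu_k+\sigma_k\Bset_k)$ and $\Uset^{\mathrm{out}}=\sum_{k\in\Kcal}|\Ccal_k|\Bset_k$ as Minkowski sums indexed by the same $k$.

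First I would record two containments. Writing $\Vset_k := \sum_{i\in\Ccal_k}\Uset_i$, the outer-approximation property \eqref{eq:outer_approx} gives $\Vset_k \subseteq |\Ccal_k|\,\Bset_k$ for each $k$; since the Minkowski sum is monotone under inclusion, summing over $k$ yields $\Uset = \sum_{k\in\Kcal}\Vset_k \subseteq \Uset^{\mathrm{out}}$. Combined with the inner-approximation constraint $\Bset\subseteq\Uset$ from the approximation problem \eqref{eq:problem}, this produces the nesting $\Bset\subseteq\Uset\subseteq\Uset^{\mathrm{out}}$.

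The key reduction is that this nesting collapses the Hausdorff distance onto a single directed distance: because $\Bset\subseteq\Uset$ the distance equals $\sup_{u\in\Uset}\inf_{b\in\Bset}\|u-b\|$, because $\Uset\subseteq\Uset^{\mathrm{out}}$ this is bounded by $\sup_{u\in\Uset^{\mathrm{out}}}\inf_{b\in\Bset}\|u-b\|$, and because $\Bset\subseteq\Uset^{\mathrm{out}}$ this last quantity is exactly $\haus{\Bset}{\Uset^{\mathrm{out}}}$; hence $\haus{\Bset}{\Uset}\le\haus{\Bset}{\Uset^{\mathrm{out}}}$. I regard this as the main obstacle and the step where the hypotheses genuinely bite: the containment $\Bset\subseteq\Uset$ is indispensable, since the Hausdorff distance is not monotone in its second argument (taking $\mu=0$ and $\sigma_k=|\Ccal_k|$ makes the right-hand side of \eqref{eq:upper_bound} vanish while $\Bset=\Uset^{\mathrm{out}}\supsetneq\Uset$ in general). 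Accordingly the statement is to be read for multi-batteries that are feasible for \eqref{eq:problem}.

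It then remains to bound $\haus{\Bset}{\Uset^{\mathrm{out}}}$, which is now routine. The directed, and hence the symmetric, Hausdorff distance is subadditive with respect to Minkowski summation (this follows term-by-term from the triangle inequality for the directed distance), so $\haus{\Bset}{\Uset^{\mathrm{out}}}\le\sum_{k\in\Kcal}\haus{\mu_k+\sigma_k\Bset_k}{|\Ccal_k|\Bset_k}$. Each summand compares two nonempty compact H-polytopes sharing the constraint matrix $H$, with right-hand sides $H\mu_k+\sigma_k b_k$ and $|\Ccal_k| b_k$ respectively (again by \eqref{eq:homothet}), so Lemma~\ref{lem:continuity} bounds it by $L(H)\,\|H\mu_k+(\sigma_k-|\Ccal_k|)b_k\|$. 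Summing over $k$ delivers \eqref{eq:upper_bound}, and since the representation $\Bset=\sum_{k\in\Kcal}(\mu_k+\sigma_k\Bset_k)$ is valid for every split $\sum_{k\in\Kcal}\mu_k=\mu$, the bound holds for any such collection $\mu_1,\dots,\mu_K$, which is precisely what allows it to be minimized over the $\mu_k$ in the convex program.
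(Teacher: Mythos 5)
Your proof is correct and follows essentially the same route as the paper: interpose the outer multi-battery $\sum_{k\in\Kcal}|\Ccal_k|\Bset_k$ using \eqref{eq:outer_approx} and the nesting $\Bset\subseteq\Uset\subseteq\sum_{k\in\Kcal}|\Ccal_k|\Bset_k$, split cluster-wise via subadditivity of the Hausdorff distance under Minkowski sums, and bound each term with Lemma~\ref{lem:continuity} applied to the H-polytope representation \eqref{eq:homothet}. Your explicit remark that the containment $\Bset\subseteq\Uset$ is an indispensable (implicit) hypothesis is accurate --- the paper's step $(a)$ relies on it in exactly the same way.
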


\begin{proof}
The desired result follows from the following string of inequalities:
\begin{align*}
    \haus{\Bset}{\Uset} &= \haus[\Big]{\sum_{k\in\Kcal} \mu_k + \sigma_k \Bset_k}{\sum_{k\in\Kcal} \sum_{i\in \Ccal_k} \Uset_i}\\
    &\overset{(a)}{\le}\haus[\Big]{\sum_{k\in\Kcal} \mu_k+\sigma_k \Bset_k}{\sum_{k\in\Kcal} |\Ccal_k| \Bset_k}\\ 
    &\overset{(b)}{\le} \sum_{k\in\Kcal} \haus{\mu_k + \sigma_k \Bset_k}{|\Ccal_k| \Bset_k} \\ 
    &\overset{(c)}{\le} L(H) \sum_{k\in\Kcal}  \left \| H \mu_k +  \sigma_k b_k  - |\Ccal_k|    b_k \right \|.
\end{align*}
Inequality $(a)$ follows from $\Bset \subseteq \Uset \subseteq \sum_{k\in\Kcal} |\Ccal_k|\Bset_k$ since $\sum_{i\in\Ccal_k} \Uset_i \subseteq |\Ccal_k|\Bset_k$ by \eqref{eq:outer_approx} and from the fact that given three nonempty and compacts sets $\Xset,\Yset,\Zset\subseteq\Rset^T$ such that $\Xset\subseteq\Yset\subseteq\Zset$, we have $\haus{\Xset}{\Yset}\le\haus{\Xset}{\Zset}$. 
Inequality $(b)$ follows from the triangle inequality for the Hausdorff distance between nonempty, compact sets.
Inequality $(c)$ follows from Lemma \ref{lem:continuity} and from using an H-polytope representation of $\mu_k+\sigma_k\Bset_k$ as in \eqref{eq:homothet}. 
\end{proof}

Using the sufficient containment conditions \eqref{eq:sum_AH_in_H_a}-\eqref{eq:sum_AH_in_H_3} in Theorem \ref{thm:Bi_in_Ui} and the surrogate objective function \eqref{eq:upper_bound} in Theorem \ref{thm:obj_fnct}, we obtain the following conservative approximation of the multi-battery approximation problem~\eqref{eq:problem}:
\begin{align}
\nonumber \textrm{minimize}\ \, \quad &  \sum_{k\in\Kcal} \left \| H \mu_k + \left( \sigma_k - |\Ccal_k| \right )   b_k \right \|\\
\label{eq:convex_program} \textrm{subject to } \quad &  \sum_{k\in\Kcal} \mu_k = \sum_{i\in\Ncal} \gamma_i,\\
\nonumber &\sigma_k I_T = \sum_{i\in\Ncal} \Gamma_{k,i}, \quad \forall k\in\Kcal,\\
\nonumber &\Lambda_{k,i}  \geq 0, \quad \forall k\in\Kcal, i\in\Ncal, \\
\nonumber &\Lambda_{k,i}  H = H \Gamma_{k,i}, \quad \forall k\in\Kcal, i\in\Ncal, \\
\nonumber &\sum_{k\in\Kcal} \Lambda_{k,i} b_k  \leq h_i -  H \gamma_i, \quad \forall i\in\Ncal.
\end{align}

Problem \eqref{eq:convex_program} is a convex program with linear constraints in the decision variables $\gamma_i$, $\Gamma_{k,i}$, $\mu_k$, $\sigma_k$ and $\Lambda_{k,i}$ for $k=1,\dots,K$ and $i=1,\dots,N$. 
The number of decision variables and constraints in this problem scale  polynomially in $N$, $K$, and $T$.  

In the special case where each individual flexibility set is a homothet of the base set associated with its cluster (i.e., for each $k \in \Kcal$ and $i\in\Ccal_k$ there exist  $\gamma_i \in \Rset^T$ and $\beta_i\in\Rset_+$ such that  $\Uset_i = \gamma_i + \beta_i \Bset_k$), then the solution obtained from solving the convex program \eqref{eq:convex_program} will equal the aggregate flexibility set exactly.

\begin{rem}[Cluster-wise problem decomposition] \rm 
We note that one can reduce the number of decision variables and constraints in problem \eqref{eq:convex_program} by replacing the original containment condition $\mu + \sum_{k \in \Kcal} \sigma_k \Bset_k \subseteq \sum_{i \in \Ncal} \Uset_i$ with a more restrictive set of cluster-wise containment conditions 
$\mu_k + \sigma_k \Bset_k \subseteq \sum_{i\in\Ccal_k} \Uset_i$ for $k=1,\dots,K$. Under this additional restriction, problem \eqref{eq:convex_program}  decomposes into  $K$ separate inner approximation problems, one for each cluster. Naturally, this reduction in complexity may be accompanied by additional conservatism.
\end{rem}

\subsection{Disaggregating Aggregate Power Profiles} \label{sec:disagg}

To actually implement an aggregate power profile $u\in\Uset$ in the aggregate flexibility set, one must decompose $u$ into a collection of individually feasible power profiles $u_i\in\Uset_i$ for $i=1,\dots,N$ such that $u=\sum_{i\in\Ncal}u_i$.
Of course, this disaggregation can be carried by solving a linear feasibility problem whose size grows with the population size $N$.

One can bypass having to solve such linear feasibility problems by utilizing the class of multi-battery approximations proposed in this paper to disaggregate any power profile in a multi-battery flexibility set using an affine mapping that is computed as a byproduct of solving the convex program \eqref{eq:convex_program}. Consider a multi-battery flexibility set  $\Bset=\mu+\sum_{k\in\Kcal}\sigma_k \Bset_k$ satisfying conditions  \eqref{eq:sum_AH_in_H_a}-\eqref{eq:sum_AH_in_H_3} in Theorem~\ref{thm:Bi_in_Ui}.
Let $u=\mu+\sum_{k\in\Kcal}\sigma_k \tilde{u}_k \in\Bset$ be an arbitrary power profile in this set, where $\tilde{u}_k \in\Bset_k$ for $k=1,\dots,K$.
It follows from \eqref{eq:sum_AH_in_H_a}-\eqref{eq:sum_AH_in_H_b} that 
\begin{align} \label{eq:disagg}
    \mu + \sum_{k\in\Kcal}\sigma_k \tilde{u}_k = \sum_{i\in\Ncal} \Big( \gamma_i + \sum_{k\in\Kcal} \Gamma_{k,i} \tilde{u}_k \Big ).
\end{align}
Since conditions \eqref{eq:sum_AH_in_H_1}-\eqref{eq:sum_AH_in_H_3} imply that $\gamma_i+\sum_{k\in\Kcal}\Gamma_{k,i} \Bset_k \subseteq \Uset_i$ for all $i\in\Ncal$ (a consequence of Lemma \ref{lem:ah-polytope containment}), it must be that $\gamma_i + \sum_{k\in\Kcal} \Gamma_{k,i} \tilde{u}_k  \in \Uset_i$ for all $i\in\Ncal$.
Accordingly, the aggregate power profile $u\in\Bset$ can be disaggregated into a collection of individually realizable power profiles given by
\begin{align}
    u_i := \gamma_i + \sum_{k\in\Kcal} \Gamma_{k,i} \tilde{u}_k 
\end{align}
for $i=1,\dots,N$.
 
\section{Experiments} \label{sec:experiments}

In this section, we illustrate the method proposed in this paper 
using simulated EV charging requirements based on a daytime workplace charging scenario.
We first show how to express the charging requirements of an  EV as a flexibility set, as defined in Eq. \eqref{eq:indi_set}.
Using this model, we then conduct experiments to illustrate the improvement in performance that can be achieved in going from a single base set (such as in \cite{zhao2017geometric} and \cite{altaha2022efficient})  to multiple base sets in the multi-battery approximation.

\subsection{Simulating EV Charging Requirements}

We consider a setting where each EV $i\in\Ncal$ plugs in to charge when arriving at time $a_i\in\Tcal$ and remains connected until its departure time $d_i\in\Tcal$.
For all time periods $t\in\{a_i,\dots,d_i\}$, we assume that EV $i$ can be charged at any rate between zero and $R_i\in\Rset_+$.
At any other time, the charging rate is required to be zero.
Each EV $i$ is assumed to request a total energy amount $E_i\in\Rset_+$ that must be delivered by its charging completion deadline.
Together, these charging constraints and energy requirements can be encoded as a virtual battery model \eqref{eq:indi_set} by specifying the charging profile limits $(\overline{u}_i,\underline{u}_i)$ and the cumulative energy profile limits $(\overline{x}_i,\underline{x}_i)$ according to
\begin{align}
    \label{eq:u_upper} \overline{u}_i(t) & = R_i \cdot \indi{ t \in \{a_i, \dots , d_i\}}\\
    \label{eq:u_lower} \underline{u}_i(t) & = 0 
\end{align}
    for $t=0,\dots,T-1$, and
\begin{align}
    \label{eq:x_upper} \overline{x}_i(t)  & =  \min(E_i,\delta R_i(t-a_i)) \cdot\indi{ t \geq a_i}\\
     \nonumber \underline{x}_i(t) & =  \max(0,E_i-\delta R_i(d_i-t)) \cdot \indi{ t \in \{a_i, \dots , d_i\}}\\
    \label{eq:x_lower}& \quad + E_i \cdot \indi{t > d_i}
\end{align}
for $t = 1,  \dots, T$. 
The upper and lower power limits in \eqref{eq:u_upper}-\eqref{eq:u_lower} restrict the charging power to be between zero and the maximum charging rate when the EV is plugged in, and zero when disconnected.
The upper and lower energy limits in \eqref{eq:x_upper}-\eqref{eq:x_lower} ensure that the EV receives its desired amount of energy by its charging completion deadline. 

Table \ref{tab:data_gen} summarizes the EV charging parameters used to simulate different scenarios of daytime workplace EV charging used in the numerical experiments of the next section.

\begin{table}[ht]
    \centering
    \begingroup
    \setlength{\tabcolsep}{6pt} 
    \renewcommand{\arraystretch}{1.2} 
    \begin{tabular}{|c|l|c|}
        \hline
        \textbf{Param.} & \textbf{Description} & \textbf{Value or Range} \\ \hline
        $\delta$ & Time period length  &  2/3 hr \\
        $T$ & Number of time periods & 18\\
        $a_i$ & Plug-in time & $[7 \text{ AM}, \, 10 \text{ AM}]$\\
        $d_i$ & Charging deadline & $[4 \text{ PM}, \, 7 \text{ PM}]$\\
        $R_i$ & Maximum charging rate  & $[7,13]$ kW\\
        $E_i$ & Desired amount of energy  & $[0,\delta R_i (d_i-a_i)]$ kWh\\ \hline
    \end{tabular}
    \endgroup
    \caption{Summary of EV charging parameters used in numerical experiments. The parameters are either fixed at the specified value or  uniformly distributed random variables over the specified interval. }
    \label{tab:data_gen}
\end{table}

\subsection{Peak Load Minimization}

\begin{figure*}[ht!]
    \centering
    \subfloat[Plot of the aggregate EV load profile under unmanaged charging, optimal charging, and suboptimal charging using the multi-battery approximation $(K=5)$.]{\includegraphics[width=0.32\linewidth]{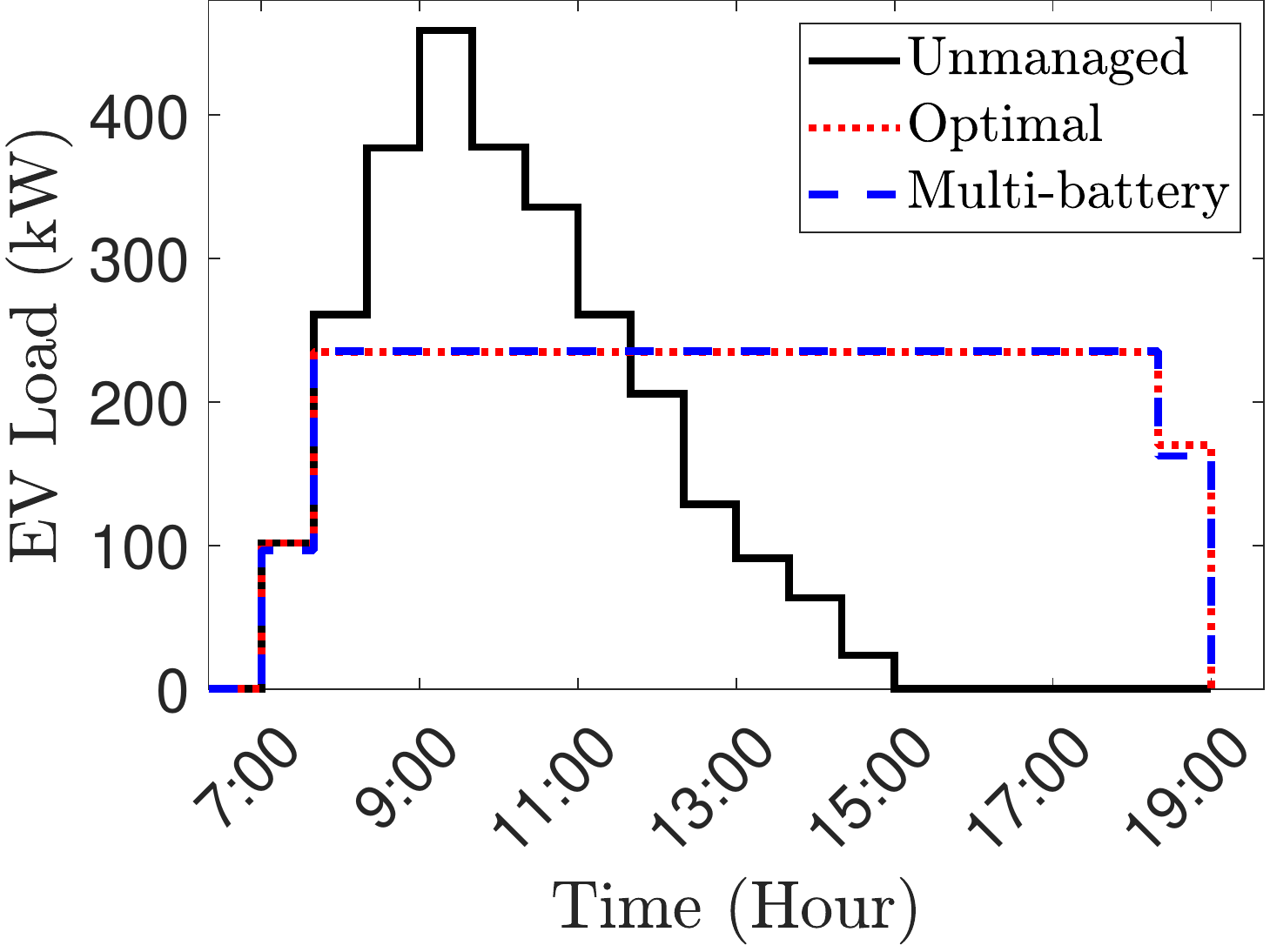}\label{fig:all_loads}}\quad
    \subfloat[Disaggregation of the EV load profile across the $K=5$ virtual batteries.]{\includegraphics[width=0.32\linewidth]{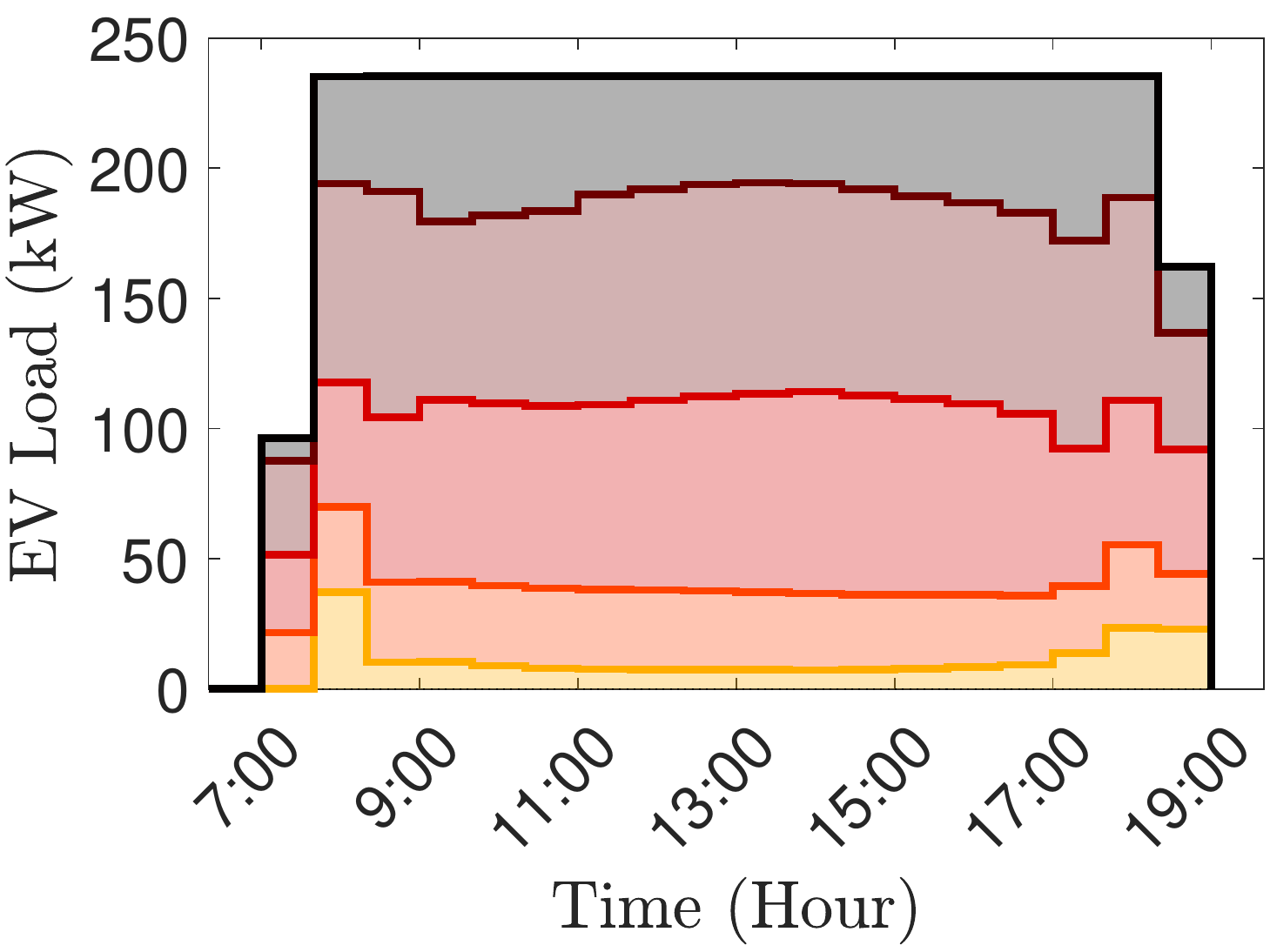}\label{fig:disagg_K}}\quad 
    \subfloat[Disaggregation of the EV load profile across the $N = 50$ individual EVs.]{\includegraphics[width=0.32\linewidth]{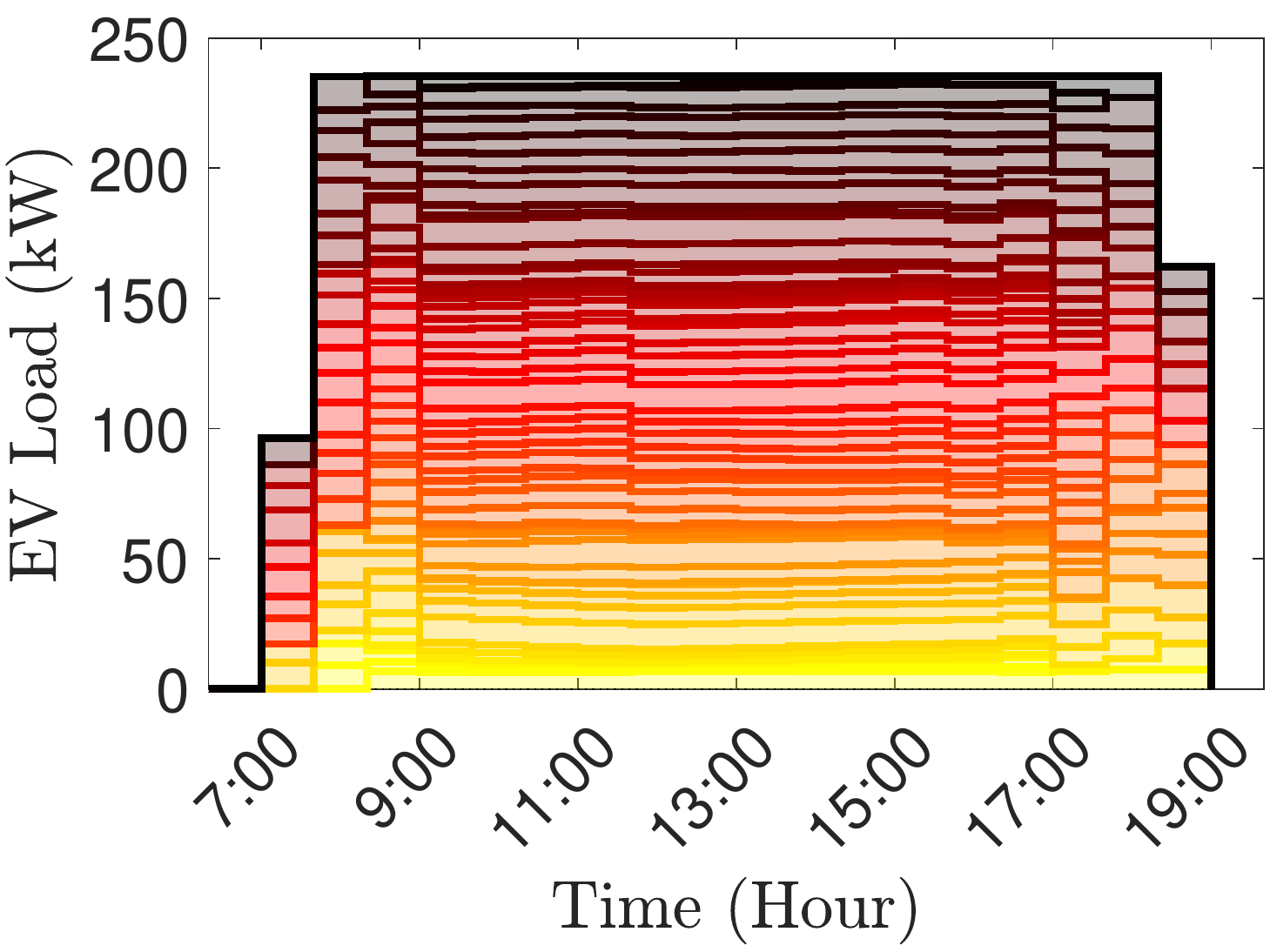}\label{fig:disagg_N}}
    \caption{Aggregate EV charging profile $u$ under a $K=5$ multi-battery approximation and its disaggregation across $K=5$ virtual batteries ($u=\sum_{k\in\Kcal}\mu_k+\sigma_k \tilde{u}_k$), and across $N=50$ EVs ($u=\sum_{i\in\Ncal}\gamma_i+\sum_{k\in\Kcal}\Gamma_{k,i}\tilde{u}_k$), where $\tilde{u}_k\in\Bset_k$ for $k=1,\dots,K$.
    The disaggregated load profiles are stacked and the solid black line (in (b) and (c)) corresponds to the total load.
    }
    \label{fig:disagg}
\end{figure*}

In this section, we evaluate the accuracy of the multi-battery approximation by varying the number of base sets used in the multi-battery approximation. We measure the performance of the resulting approximations in the context of a simple aggregate peak load minimization problem. 

More specifically, given an aggregate flexibility set $\Uset$, the aggregator seeks to solve the following optimization problem:
\begin{align*}
     \text{minimize}  \ \, \|u\|_\infty  \ \, \text{subject to} \ \ u\in\Uset, 
\end{align*}
whose optimal value we denote by $J^* := \min\{ \|u\|_\infty \,|\, u\in\Uset \}$.
The optimal solution to this peak shaving problem will be a feasible aggregate charging profile that meets the energy demands of all EVs while using minimal aggregate charging rates.
Such controlled charging is more desirable than unmanaged charging (where EVs charge at the maximum rate until their energy demand is met), which can result in a large aggregate peak load, as illustrated in Figure \ref{fig:all_loads}.

We conduct 100 independent experiments, where, for each experiment, we simulate the charging requirements of $N=50$ EVs and construct their corresponding individual flexibility sets.
Given these randomly sampled sets, we compute different multi-battery approximations by varying the number of base sets $K$ from one to five.
These base sets are constructed as described in Section \ref{sec:cluster}, where the single-battery set ($K=1$) corresponds to the base set used in \cite{zhao2017geometric} and \cite{altaha2022efficient}. 
Then, for each multi-battery approximation, we solve the peak shaving problem over the approximation of the aggregate flexibility set, and calculate the suboptimality gap incurred by its corresponding solution as 
\begin{align*}
    \text{gap}(K) \ = 100 \times \left(\frac{J(K)-J^*}{J^*}\right),
\end{align*}
where $J(K):=\min\{ \|u\|_\infty \,|\, u\in\Bset(K)\}$ and $\Bset(K)$ is the multi-battery flexibility set composed of $K$ base sets.
The multi-battery $\Bset(K)$ is computed by solving problem \eqref{eq:convex_program} using the Euclidean norm in the objective function, and using the sparse representations of the individual flexibility sets  defined in Remark \ref{rem:sparse} for faster computations.

In Figure~\ref{fig:optval}, we plot the suboptimality gap as a function of the number of base sets $K$.
It can be seen that there is a significant improvement in performance when going from a solution obtained with a single-battery approximation ($K=1$) to a solution obtained with a multi-battery approximation ($K>1$).

\begin{figure}
    \centering
    \includegraphics[width=.7\columnwidth]{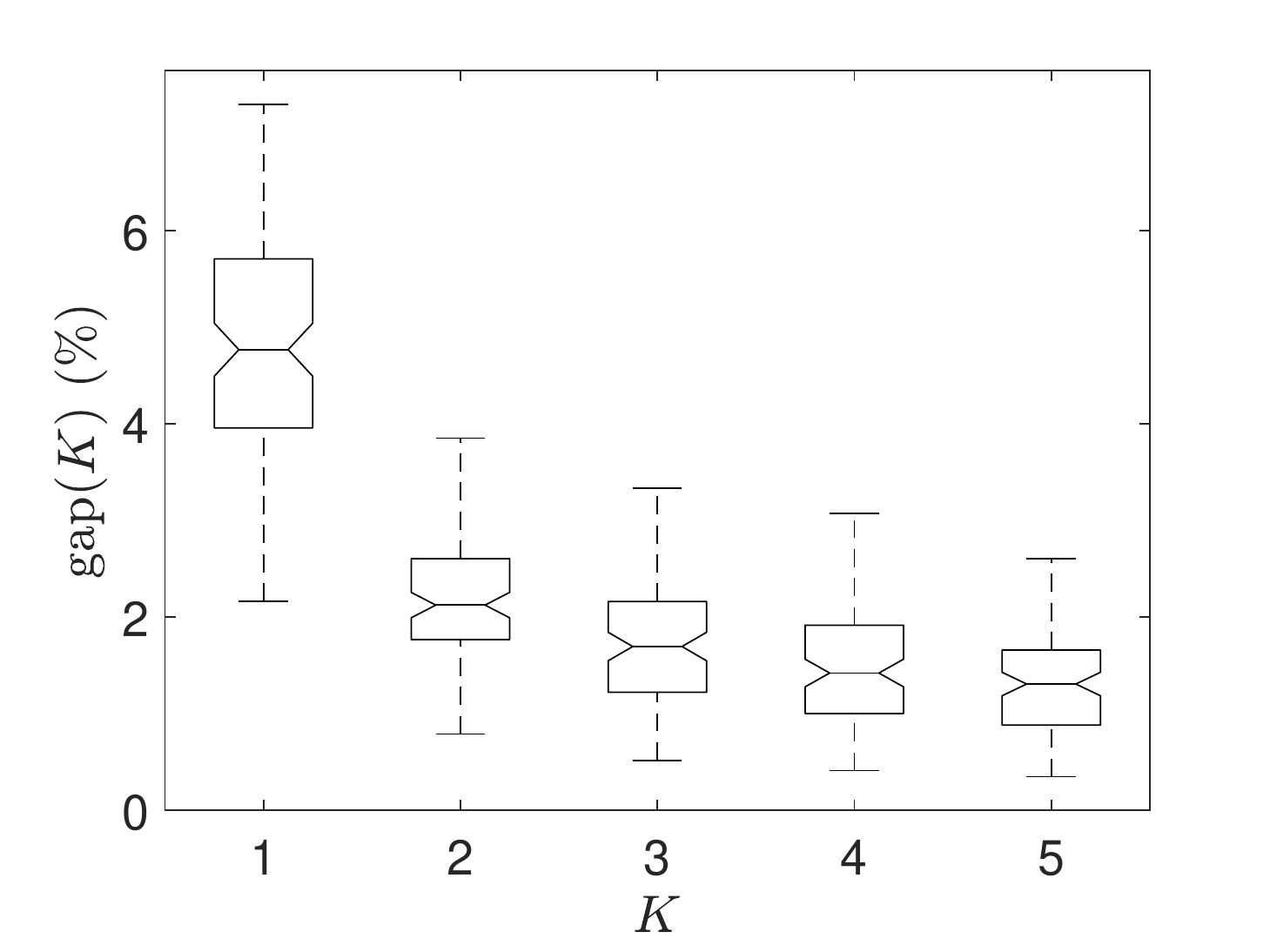}
    \caption{Suboptimality gap versus the number of base sets $K$. The whiskers determine the min-max range, the box delimits the interquartile range and the notch determines the median.
    }
    \label{fig:optval}
\end{figure}

For one randomly selected experiment, we depict in Figure~\ref{fig:disagg} the suboptimal load profile obtained from  optimizing over a multi-battery approximation with $K=5$ base sets, along with a disaggregation of this profile across the $K$ virtual batteries and across the $N$ EVs.
While the aggregate load is flattened (such that the aggregate charging rate is minimal), the disaggregated charging profiles are not necessarily flat.
As the connection windows of the EVs do not overlap completely, it can be advantageous to charge at a higher rate when there are fewer EVs connected.

\section{Conclusion} \label{sec:conclusion}

 In this paper, we presented a multi-battery modelling approach to  internally  approximating the Minkowski sum of a collection of heterogeneous \ EV flexibility sets.
This model yields a novel class of approximating polytopes composed of the sum of transformations of multiple given convex polytopes (termed \textit{base sets}), which are selected using a clustering algorithm to capture the different geometric shapes represented by the different  individual flexibility sets.
The number of base sets (a user-defined parameter) controls the multi-battery model's complexity as well as the quality of the resulting approximations.
We show how to conservatively approximate the problem of computing a multi-battery inner approximation which minimizes the Hausdorff distance to the aggregate flexibility set using a convex program whose size scales polynomially with the number and dimension of the individual flexibility sets and base sets.
Using a peak load minimization application, the proposed multi-battery approximation methods are shown to improve upon the performance of single-battery  approximations. 

As a direction for future research, it would be interesting to extend the modeling framework and  methods developed in this paper to accommodate more realistic EV charging characteristics, such as battery energy dissipation, charging inefficiencies, battery degradation, and other factors affecting the charging dynamics and constraints of an EV battery.

\bibliographystyle{IEEEtran}
\bibliography{references}{\markboth{References}{References}}

\end{document}